\documentclass[journal]{IEEEtran}

\usepackage{amsmath}  
\interdisplaylinepenalty=2500 
\usepackage{amsfonts,amssymb}
\usepackage{graphicx,color}   

\usepackage{ifxetex}
\ifxetex
\usepackage{xunicode}
\else
\usepackage[utf8]{inputenc}
\usepackage[T1]{fontenc}
\usepackage{microtype}
\pdfobjcompresslevel=0
\pdfminorversion=4
\fi

\usepackage{cite}  

\newtheorem{theorem}{Theorem}
\newtheorem{remark}{Remark}
\newtheorem{lemma}{Lemma}

\newcounter{mytempeqncnt}

\bibliographystyle{IEEEtran}

\begin{document}

\title{The Half-Duplex AWGN Single-Relay Channel: Full Decoding or Partial Decoding?}
\author{Lawrence Ong, Sarah J. Johnson, and Christopher M. Kellett}

\maketitle

\begin{abstract}
This paper compares the partial-decode-forward and the complete-decode-forward coding strategies for the half-duplex Gaussian single-relay channel. We analytically show that the rate achievable by partial-decode-forward outperforms that of the more straightforward complete-decode-forward by at most 12.5\%. Furthermore, in the following asymptotic cases, the gap between the partial-decode-forward and the complete-decode-forward rates diminishes: (i) when the relay is close to the source, (ii) when the relay is close to the destination, and (iii) when the SNR is low. In addition, when the SNR increases, this gap, when normalized to the complete-decode-forward rate, also diminishes.  Consequently, significant performance improvements are not achieved by optimizing the fraction of data the relay should decode and forward, over simply decoding the entire source message.
\end{abstract}

\begin{keywords}
Achievable rate, decode-forward, half duplex, partial decode-forward, relay channel.
\end{keywords}

\IEEEpeerreviewmaketitle

\section{Introduction}

Wireless relay networks are ubiquitous from classical satellite communications to the increasing use of wireless ad hoc networks where mobiles, tablets, or laptops have the ability to act as wireless relays for other similar devices.  A common coding strategy deployed in relay networks is for the relays to decode the source message, and then forward a function of the message (e.g., some parity bits) to the destination to facilitate decoding. In this context, one can use a {\it partial-decode-forward (PDF)} scheme where the relays decode only a fraction of the source message and forward a function of the decoded part, or a {\it complete-decode-forward (CDF)} scheme where the relays fully decode the source message and forward a function of it. Another consideration in designing wireless relay networks is whether to use a full-duplex (communication possible in both directions simultaneously without interference) or a half-duplex (communication possible in only one direction at a time) scheme. The choice of full- versus half-duplex, as well as the choice of CDF or PDF, impacts on the achievable communication rate. Achievable rates have been demonstrated for the full-duplex relay channel using CDF \cite{covergamal79,kramergastpar04} and PDF \cite{covergamal79},\cite[Thm.\ 16.3]{elgamalkim2001}, and for the half-duplex relay channel using CDF \cite{ongwangmotani08allerton} and PDF \cite{hostmadsen02,hostmadsenzhang05}.

PDF includes CDF and direct transmission as special cases---the relay decodes all source messages in the former and decodes nothing in the latter.
For the additive white Gaussian noise (AWGN) full-duplex relay channel, it has been shown~\cite{elgamalmehseni06} that (i) when the source-destination link is better than the source-relay link, the maximum PDF rate can be achieved by not using the relay at all, i.e., direct transmission; and (ii) when the source-relay link is better than the source-destination link, the maximum PDF rate can be achieved by having the relay decoding all source messages, i.e., CDF.

While it is not desirable to have the relay decoding only part of the messages in the AWGN full-duplex relay channel, PDF can achieve rates strictly higher than direct transmission and CDF in the half-duplex counterpart.

\subsection{Main Results}

In this paper, we focus on the half-duplex relay channel, and compare the rates achievable by existing coding schemes.
For a given half-duplex relay channel, let $R_\text{CDF}$ and $R_\text{PDF}$ be the maximum rates achievable (optimized over channel codes under respective coding schemes) using the complete-decode-forward and the partial-decode-forward schemes respectively.
For the trivial case where the source-destination link is better than the source-relay link, $R_\text{PDF}$  can be attained by direct transmission (similar to the full-duplex case). Otherwise, we have the following:
\begin{theorem} \label{theorem:main}
Consider the AWGN half-duplex single-relay channel. If the source-relay link is better than the source-destination link, then 
\begin{equation}
R_\text{CDF} \leq R_\text{PDF} \leq \frac{9}{8} R_\text{CDF}.
\end{equation}
\end{theorem}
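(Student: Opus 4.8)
The first inequality $R_\text{CDF}\le R_\text{PDF}$ is immediate: complete-decode-forward is exactly the special case of partial-decode-forward in which the relay is constrained to decode the whole source message, so every CDF-achievable rate is PDF-achievable.

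For $R_\text{PDF}\le\tfrac98 R_\text{CDF}$ the plan is to reduce both rates to explicit scalar optimizations and then bound a ratio. Write $C(x)=\tfrac12\log_2(1+x)$, let $\gamma_{sr},\gamma_{sd},\gamma_{rd}$ be the received SNRs of the source-relay, source-destination, and relay-destination links, and set $a=C(\gamma_{sr})$, $b=C(\gamma_{sd})$, $c=C\!\big((\sqrt{\gamma_{sd}}+\sqrt{\gamma_{rd}})^{2}\big)$; the hypothesis that the source-relay link dominates the source-destination link is precisely $a>b$, and $c>b$ holds whenever the relay-destination link is non-trivial. Parametrizing a two-phase half-duplex scheme by the fraction $t$ of time the relay listens, complete-decode-forward has the standard form $R_\text{CDF}=\max_{t}\min\{\,ta,\;tb+(1-t)c\,\}$. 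For PDF I would take the known half-duplex partial-decode-forward achievable rate, specialize it to AWGN, and write out its three binding constraints: the relay decodes the common part, the destination decodes the common part, and the destination decodes the private part (which, in half duplex, is the stream the source sends during the relay's transmit phase and which the relay never hears). Collapsing the Gaussian superposition algebra should turn this into
\[
R_\text{PDF}=\max_{t,\alpha}\ \min\{\,ta+(1-t)p,\;tb+(1-t)(c'+p)\,\},
\]
where $p=C(\bar\alpha\gamma_{sd})$ is the private-rate term ($\bar\alpha$ being the fraction of the source's transmit-phase power carrying the fresh stream), $c'$ is the cooperative common-rate term, and $c'+p=\tfrac12\log_2\!\big(1+\gamma_{sd}+\gamma_{rd}+2\sqrt{\alpha\gamma_{sd}\gamma_{rd}}\,\big)$, so that $p\le b$ and $c'+p\le c$. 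Two reductions need care: that because $a>b$ it is never beneficial to superimpose a private stream during the \emph{listening} phase, so that all the PDF gain resides in the transmit phase; and the bounds $p\le b$, $c'+p\le c$, which is where the coherent-combining quantity $c$ enters.

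Given this normal form the remainder is short. Using $p\le b$ and $c'+p\le c$ in the two constraints gives $R_\text{PDF}\le\max_t\min\{\,ta+(1-t)b,\;tb+(1-t)c\,\}$; the first term increases in $t$ from $b$ to $a$ and the second decreases from $c$ to $b$, so the optimum is at the interior crossing point (which is exactly where $a>b$ and $c>b$ are used) and equals $\dfrac{ac-b^{2}}{a+c-2b}$, while the identical computation yields $R_\text{CDF}=\dfrac{ac}{a-b+c}$. It then remains to prove $\dfrac{ac-b^{2}}{a+c-2b}\le\dfrac98\cdot\dfrac{ac}{a-b+c}$ for all $a,c>b>0$. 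Both sides are positively homogeneous of degree one, so normalize $b=1$; clearing denominators leaves $(ac-1)(a+c-1)\le\tfrac98\,ac\,(a+c-2)$, which is affine in $ac$ for fixed $a+c$. Using $ac\le\tfrac14(a+c)^{2}$ together with $ac>(a+c)-1$ (i.e.\ $(a-1)(c-1)>0$) and evaluating at the extreme admissible value of $ac$, the inequality collapses to $(m-1)(m-2)^{2}\ge0$ with $m=(a+c)/2$ when $m<5$, and to $(2m-1)(m-1)\ge0$ when $m\ge5$, both obvious since $m>1$. The constant $\tfrac98$ is precisely the one that turns the $m<5$ case into a perfect square; equality is approached, though not attained, in the limit $a,c\to2b$ with all SNRs growing.

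I expect the main obstacle to be the first step: deriving and simplifying the general PDF achievable rate into the clean two-constraint form above, in which the private-rate contribution is bounded by $b$ and the cooperative contribution by $c$. Two sub-arguments carry the weight: that a listening-phase private stream is worthless once $\gamma_{sr}>\gamma_{sd}$, so that all the PDF advantage is confined to the transmit phase; and the Gaussian bookkeeping that produces $c'+p\le c$. Everything downstream is routine---one scalar max-min and one two-variable polynomial inequality that factors through $(m-1)(m-2)^{2}\ge0$---the only extra care being that the two max-min optima over $t$ are genuinely interior, which again uses $a>b$ and $c>b$.
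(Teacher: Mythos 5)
Your proposal is correct and follows essentially the same route as the paper: replace $R_\text{PDF}$ by the upper bound obtained by letting each term of the min take its own best power split (the paper's $\beta=0$ and $\beta=1$ choices, your $p\le b$ and $c'+p\le c$), evaluate both max-mins at their interior crossing points to get $R_\text{CDF}=\frac{ac}{a+c-b}$ and $R_\text{PDF}^\text{UB}=\frac{ac-b^2}{a+c-2b}$, and show the ratio never exceeds $\tfrac98$, with worst case $a=c=2b$. The only (minor) difference is in the last elementary step: you prove the final inequality by direct algebra (affine in $ac$ for fixed $a+c$, reducing to $(m-1)(m-2)^2\ge 0$), while the paper maximizes the normalized gap first over the coherent-combining term $q$ and then minimizes $h(s)=s\bigl[1+2(s-1)^{-1/2}+(s-1)^{-1}\bigr]$ at $s=2$, reaching the same extremal point and the same constant $\tfrac98$.
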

This means regardless of channel parameters, partial decoding can achieve at most 12.5\% rate gain compared to full decoding.

We further show that in the following three asymptotic cases, PDF and CDF actually achieve the same rates: (i) the relay is close to the source, (ii) the relay is close to the destination, and (iii) the signal-to-noise ratio (SNR) is low. We also show that for the case of high SNR, CDF achieves rates within a constant bit gap (which is a function of the channel gains) of that achievable by PDF. This gap, when normalized to the CDF rate, diminishes as the SNR increases. Theoretically, this work gives an upper bound to the performance gain of PDF over CDF. Practically, it shows that without a significant percentage loss in transmission rate, one could choose CDF over PDF, as the former is potentially less complex to implement, and is certainly less complex to design.

\section{Background}

\subsection{Channel Model}
Fig.~\ref{fig:src} denotes the single-relay channel consisting of the source (denoted by node 0), the relay (node 1), and the destination (node 2).
We consider the AWGN channel where (i) the received signal at the relay is given by $Y_1 = h_{01} X_0 + Z_1$, and (ii) the received signal at the destination is given by $Y_2 = h_{02} X_0 + h_{12} X_1 + Z_2$. Here, $X_i$ is the transmitted signal of node $i$, $Z_j$ is the noise at node $j$, and $h_{ij}$ is the channel gain from node $i$ to node $j$. Each $Z_j$ is an independent zero-mean Gaussian random variable with variance $E[Z_j^2] = N_j$. The channel gains are fixed and are made known a priori to all the nodes. 

\begin{figure}[t]
\centering
\includegraphics[width=\linewidth]{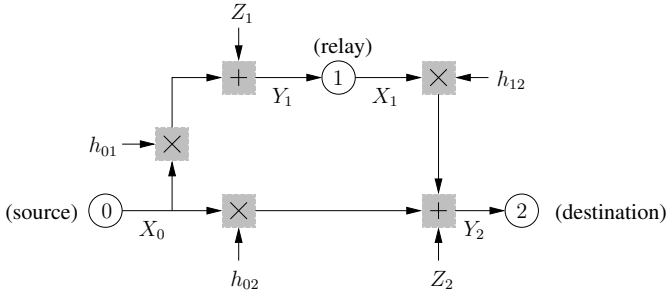}
\caption{The AWGN single-relay channel}
\label{fig:src}
\end{figure}

\subsection{Rate Definition}

Consider a block of $n$ channel uses, and let $x_{it}$ (and $y_{it}$) denote the transmitted (and received) symbol of node $i$ at the $t$-th channel use. 
A rate-$R$ block code comprises (i) a message set $\mathcal{W} = \{1,2,\dotsc,2^{nR}\}$, (ii) an encoding function for the source $\boldsymbol{x}_0 = f_0(w)$ for each message $w \in \mathcal{W}$, (iii) encoding functions for the relay $x_{1t} = f_{1t}(y_{11}, y_{12}, \dotsc, y_{1t-1})$ for all $t \in \{1,2,\dotsc, n\}$, and (iv) a decoding function for the destination $\hat{w} = f_2(\boldsymbol{y}_2)$. Here, we have used bold-faced symbols to represent vectors of length $n$, e.g., $\boldsymbol{x}_0 = (x_{01}, x_{02}, \dotsc, x_{0n})$.
We assume that the message is uniformly distributed in $\mathcal{W}$.
The rate $R$ is achievable if there exists a sequence of rate-$R$ block codes such that $\Pr \{\hat{W} \neq W\} \rightarrow 0$ as $n \rightarrow \infty$.

\subsection{Half-Duplex and Power Constraints}

For the half-duplex channel, the relay can only transmit or receive (i.e., listen), but not do both, at any time. We set $Y_{1t} = 0$ if the relay transmits at the $t$-th channel use, and $X_{1t}=0$ otherwise, i.e., if the relay listens.
 Note that the definition of block code above is applicable for both full-duplex and half-duplex relay channels with these extra transmit/listen constraints. Without loss of optimality, the source always transmits, and the destination always listens.

We consider a {\em fixed-slot} structure where the transmit/listen mode of the relay for each channel use is fixed prior to the transmissions and is known to all nodes~\cite[p.\ 348]{kramermaric06}. 
Consider $n$ channel uses during which the relay listens in $\alpha n$ channel uses and transmits in $(1-\alpha)n$ channel uses. We assume the following {\em per-symbol} power constraints~\cite[p.\ 304]{kramermaric06}: $E[X_{it}^2] \leq P_i$ for both $i \in \{0,1\}$ and for all $t \in \{1,2,\dotsc,n\}$, where the expectation operation $E[\cdot]$ is taken over the message $W$ and the channel noise $Z_1$. As a result, the source and the relay cannot optimize their transmit power for variations in the relay's transmit/listen mode.\footnote{In addition to modeling instantaneous power constraints, this assumption also simplifies the analyses in this paper as the instantaneous transmit power of both the source and the relay is not a function of $\alpha$.} We denote the SNR from node $i$ to node $j$ by $\lambda_{ij} \triangleq \frac{ h_{ij}^2 P_i}{ N_j}$. We assume that $\lambda_{ij} >0$ for all $i$ and $j$.

\subsection{Achievable Rates}

We now summarize the encoding and decoding schemes of PDF for the half-duplex relay channel~\cite{hostmadsen02,hostmadsenzhang05}. Consider $B$ blocks, each of $n$ channel uses, and $(B-1)$ source messages $\{W^{<i>}\}_{i=1}^{B-1}$, where each $W^{<i>} \in \{1,\dotsc, 2^{nR}\}$. The source splits each message $W^{<i>}$ into two independent parts $W^{<i>} = (U^{<i>},V^{<i>})$, where $U^{<i>} \in \{1,\dotsc, 2^{nR_U}\}$ and $V^{<i>} \in \{1,\dotsc, 2^{nR_V}\}$ so that $R = R_U + R_V$. In block $b \in \{1,\dotsc, B-1\}$, the source transmits $(U^{<b-1>},U^{<b>}, V^{<b>})$. The relay decodes $U^{<b>}$ in block $b$, and transmits it in block $(b+1)$. The rate of this code over the entire $B$ transmission blocks is $\frac{(B-1)nR}{nB}$ bits/channel use. If the destination can decode all $\{W^{<i>}\}_{i=1}^{B-1}$ reliably (i.e., with diminishing error probability as $n$ increases), then by choosing a sufficiently large $B$,  any rate below and arbitrarily close to $R$ is achievable.

\begin{figure}[t]
\centering
\includegraphics[width=\linewidth]{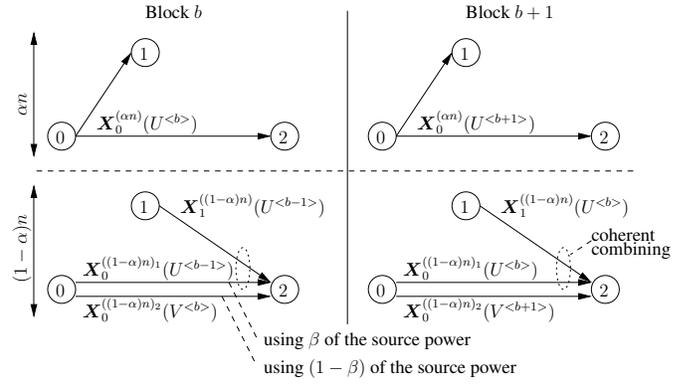}
\caption{PDF on the half-duplex relay channel, where in $\alpha$ of the time the relay listens, and in $(1-\alpha)$ of the time the relay transmits}
\label{fig:pdf-half}
\end{figure}

As the encoding and decoding operations repeat themselves over the blocks, we focus on encoding in block $b$ and decoding over blocks $b$ and $(b+1)$, depicted in Fig.~\ref{fig:pdf-half}. During the first $\alpha n$ channel uses of block $b$ (when the relay listens)\footnote{As an example, we assume that the relay listens in the first $(\alpha n)$ channel uses and transmits in the remaining channel uses of each block. In general, the channel uses in which the relay listens or transmits can be arbitrarily chosen in each block without affecting the achievable rates as long as $\alpha$ is fixed.}, the source transmits $\boldsymbol{X}_0^{(\alpha n)}(U^{<b>})$. Here, we have used the superscript $(\alpha n)$ to denote the length of the sub-codeword. Assume that the relay has decoded $U^{<b-1>}$ in the previous block, i.e., block $(b-1)$. During the remaining $(1-\alpha)n$ channel uses of block $b$,  the relay transmits an independently generated codeword $\boldsymbol{X}_1^{((1-\alpha)n)} (U^{<b-1>})$, while the source splits its transmit power into two parts: (i) a portion of $\beta$ of the power is used to transmit the same codeword as the relay, i.e., $\boldsymbol{X}_0^{((1-\alpha)n)_1}(U^{<b-1>}) = \sqrt{\frac{\beta P_0}{P_1}}\boldsymbol{X}_1 ^{((1-\alpha)n)} (U^{<b-1>})$, so that this portion of the source's signal and the relay's signal add coherently at the destination\footnote{The source transmits the message $U^{<b>}$ twice: $\boldsymbol{X}_0^{(\alpha n)}(U^{<b>})$ in block $b$ and $\boldsymbol{X}_0^{((1-\alpha)n)_1}(U^{<b>})$ in block $(b+1)$. Note that these two codewords are independently generated.}; and (ii) the remaining portion of $(1-\beta)$ of the power is dedicated to a different codeword that is to be decoded by the destination, i.e., $\boldsymbol{X}_0^{((1-\alpha)n)_2}(V^{<b>})$. In brief, the source transmits $\boldsymbol{X}_0 = [ \boldsymbol{X}_0^{(\alpha n)}(U^{<b>}), \boldsymbol{X}_0^{((1-\alpha)n)_1}(U^{<b-1>})  + \boldsymbol{X}_0^{((1-\alpha)n)_2}(V^{<b>})]$ in block $b$.

The above encoding scheme requires that the relay decodes $U^{<b>}$  at the end of block $b$;  the relay can reliably do so  if $R_U \leq \frac{\alpha}{2} \log ( 1 + \lambda_{01})$.
The decoding of the message $W^{<b>}$ at the destination is performed over blocks $b$ and $(b+1)$. Assume that the destination has decoded messages $\{W^{<i>}\}_{i=1}^{b-1}$. Using the last $(1-\alpha)n$ channel uses of block $b$, the destination can decode $V^{<b>}$ if $R_V \leq \frac{1-\alpha}{2} \log( 1 + (1-\beta)\lambda_{02})$. Using the first $\alpha n$ channel uses of block $b$ and the last $(1-\alpha)n$ channel uses of block $(b+1)$, the destination can decode $U^{<b>}$ if $R_U \leq \frac{\alpha}{2} \log (1 + \lambda_{02}) + \frac{1-\alpha}{2} \log \left( 1+  \frac{[ \sqrt{\lambda_{12}} + \sqrt{\beta\lambda_{02}}]^2}{ (1-\beta) \lambda_{02} + 1} \right)$. Here, $[ \sqrt{\lambda_{12}} + \sqrt{\beta \lambda_{02}}]^2 $ is power of the coherently combined signals from the relay and the source, $(1-\beta) \lambda_{02}$ is the power of signals carrying $V^{<b+1>}$ which appears as noise when decoding $U^{<b>}$, and the channel noise power $E[Z_2^2]$ is one. Combining these rate constraints, PDF~\cite{hostmadsen02,hostmadsenzhang05}  achieves rates up to 

\begin{align}
&R_\text{PDF} = \max_{0 \leq \alpha, \beta \leq 1} \min \Big\{ \nonumber \\ 
& \frac{\alpha}{2} \log ( 1  + \lambda_{01} ) + \frac{1-\alpha}{2} \log ( 1 + (1-\beta)\lambda_{02}), \nonumber \\ 
&\frac{\alpha}{2} \log ( 1 + \lambda_{02} ) + \frac{1-\alpha}{2} \log ( 1 + \lambda_{02} + \lambda_{12} + 2 \sqrt{\beta \lambda_{02} \lambda_{12} }) \Big\}. \label{eq:pdf-rate}
\end{align}

Setting $\beta=1$ (which means $V^{<i>} = \varnothing$; i.e., the relay decodes all source messages), we have CDF~\cite{ongwangmotani08allerton}, which achieves rates up to  

\begin{align}
&R_\text{CDF} = \max_{0 \leq \alpha \leq 1} \min \Big\{ \frac{\alpha}{2} \log ( 1  + \lambda_{01} ), \nonumber \\
& \frac{\alpha}{2} \log ( 1 + \lambda_{02} ) + \frac{1-\alpha}{2} \log ( 1 + [\sqrt{\lambda_{02}} + \sqrt{\lambda_{12}}]^2) \Big\}. \label{eq:cdf-rate}
\end{align}
Clearly, $R_\text{CDF} \leq R_\text{PDF}$. Furthermore, setting $\alpha = 0$ and $\beta = 0$ (which means $U^{<i>} = \varnothing$; i.e., the relay decodes nothing), we have direct transmission.

When $\lambda_{01} < \lambda_{02}$ (the source-destination link is better than the source-relay link), $R_\text{PDF} = \frac{1}{2} \log ( 1 + \lambda_{02})$, which is attained only at $\alpha = \beta = 0$, i.e., with direct transmission.

When $\lambda_{01} = \lambda_{02}$ (the source-destination link is as good as the source-relay link), we again have $R_\text{PDF} = \frac{1}{2} \log ( 1 + \lambda_{02})$, which is attained by $\beta = 0$ and any $\alpha \in [0,1]$. Although the relay is used when $\alpha> 0$, PDF still achieves the same rate as that of direct transmission, $\alpha = \beta =0$.

For the rest of this paper, we will consider the non-trivial case of $\lambda_{01} > \lambda_{02}$ (when the source-relay link is better than the source-destination link).
We will show later that $R_\text{PDF}$ can be strictly higher than $R_\text{CDF}$. Furthermore, it is easy to see that, as all $\lambda_{ij} > 0$, both PDF and CDF strictly outperform direct transmission.

We define the gap between the PDF and the CDF rates by

\begin{equation}
G \triangleq R_\text{PDF} - R_\text{CDF},
\end{equation}
and the normalized (to $R_\text{CDF}$) gap by

\begin{equation}
\bar{G} \triangleq \frac{ R_\text{PDF} - R_\text{CDF} }{R_\text{CDF} }. \label{eq:normalized-gap}
\end{equation}

\section{CDF vs. PDF for $\lambda_{01} > \lambda_{02}$}

First, we have the following lemma:

\begin{lemma}\label{lemma:cdf}
Let $\alpha^*$ be the $\alpha$ that attains $R_\text{CDF}$. If $\lambda_{01} > \lambda_{02}$,  it follows that $\frac{\alpha^*}{2} \log ( 1  + \lambda_{01} ) = \frac{\alpha^*}{2} \log ( 1 + \lambda_{02} ) + \frac{1-\alpha^*}{2} \log ( 1 + [\sqrt{\lambda_{02}} + \sqrt{\lambda_{12}}]^2)$.
\end{lemma}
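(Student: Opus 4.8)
The plan is to exploit that both expressions inside the $\min$ in the definition \eqref{eq:cdf-rate} of $R_\text{CDF}$ are affine in $\alpha$ with opposite monotonicities, so the inner $\min$ is a ``tent''-shaped function whose peak sits exactly at the point where the two affine pieces cross. Write $g_1(\alpha) \triangleq \frac{\alpha}{2}\log(1+\lambda_{01})$ and $g_2(\alpha) \triangleq \frac{\alpha}{2}\log(1+\lambda_{02}) + \frac{1-\alpha}{2}\log\big(1+[\sqrt{\lambda_{02}}+\sqrt{\lambda_{12}}]^2\big)$, so that $R_\text{CDF}=\max_{0\le\alpha\le1}\min\{g_1(\alpha),g_2(\alpha)\}$, and the claim is precisely that $g_1(\alpha^*) = g_2(\alpha^*)$.

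First I would record the monotonicity. Since $\lambda_{01}>0$, the function $g_1$ is strictly increasing on $[0,1]$ with $g_1(0)=0$. Since $\lambda_{12}>0$, we have $[\sqrt{\lambda_{02}}+\sqrt{\lambda_{12}}]^2 = \lambda_{02}+\lambda_{12}+2\sqrt{\lambda_{02}\lambda_{12}} > \lambda_{02}$; hence the coefficient of $\alpha$ in $g_2$, namely $\tfrac12\log(1+\lambda_{02}) - \tfrac12\log(1+[\sqrt{\lambda_{02}}+\sqrt{\lambda_{12}}]^2)$, is strictly negative, so $g_2$ is strictly decreasing, and being affine and decreasing it satisfies $g_2(\alpha)\ge g_2(1)=\tfrac12\log(1+\lambda_{02})>0$ for all $\alpha\in[0,1]$.

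Next I would locate the crossing. At $\alpha=0$, $g_1(0)=0 < g_2(0)$. At $\alpha=1$, $g_1(1)=\tfrac12\log(1+\lambda_{01}) > \tfrac12\log(1+\lambda_{02}) = g_2(1)$, where the inequality is exactly where the hypothesis $\lambda_{01}>\lambda_{02}$ enters. Since $g_1-g_2$ is affine (hence continuous) and changes sign on $[0,1]$, there is a unique $\alpha_0\in(0,1)$ with $g_1(\alpha_0)=g_2(\alpha_0)$. For $\alpha\le\alpha_0$ the inner $\min$ equals $g_1(\alpha)$, which is nondecreasing, while for $\alpha\ge\alpha_0$ it equals $g_2(\alpha)$, which is nonincreasing; therefore $\min\{g_1,g_2\}$ attains its maximum at $\alpha=\alpha_0$. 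Hence $\alpha^*=\alpha_0$ and $g_1(\alpha^*)=g_2(\alpha^*)$, which is the assertion (and this argument also shows $\alpha^*$ is unique).

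I do not expect a genuine obstacle here; the only two points that need care are (a) checking $[\sqrt{\lambda_{02}}+\sqrt{\lambda_{12}}]^2>\lambda_{02}$ so that $g_2$ genuinely decreases — this is the sole place where $\lambda_{12}>0$ is used — and (b) ruling out the boundary maximizers $\alpha\in\{0,1\}$, which the strict sign change of $g_1-g_2$ at the endpoints handles automatically. An alternative, essentially equivalent route is a perturbation/contradiction argument: if $g_1(\alpha^*)\ne g_2(\alpha^*)$, one could move $\alpha$ slightly toward $\alpha_0$ and strictly increase the smaller of the two affine functions without it overtaking the larger, contradicting the optimality of $\alpha^*$; the explicit tent-function description above is just a cleaner packaging of this.
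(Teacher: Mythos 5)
Your proposal is correct and follows essentially the same route as the paper's proof: both arguments establish that one term in the min is increasing from $0$ and the other is decreasing to a value below the first term's endpoint value, so the max-min is attained where the two affine functions cross. Your version simply spells out the monotonicity checks (in particular $[\sqrt{\lambda_{02}}+\sqrt{\lambda_{12}}]^2>\lambda_{02}$) and the endpoint sign comparison in more detail than the paper does.
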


\begin{proof}[Proof of Lemma~\ref{lemma:cdf}]
Let $a(\alpha) = \frac{\alpha}{2} \log ( 1  + \lambda_{01} )$ and $b(\alpha) = \frac{\alpha}{2} \log ( 1 + \lambda_{02} ) + \frac{1-\alpha}{2} \log ( 1 +[ \sqrt{\lambda_{02}} + \sqrt{\lambda_{12}}]^2)$. The function $a(\alpha)$ is a continuous and increasing in $\alpha$ with $a(0) = 0$ and $a(1) > 0$; $b(\alpha)$ is a continuous and decreasing function of $\alpha$  with $b(0) > 0$ and $b(1) = \frac{1}{2} \log ( 1 + \lambda_{02} ) < a(1)$. So the solution to $\max_{0 \leq \alpha \leq 1} \min \{ a(\alpha), b(\alpha) \}$ is at the point where $a(\alpha)$ and $b(\alpha)$ intersect.
\end{proof}

From the above lemma, we can show that
\begin{lemma}\label{lemma:cdf-rate}
If $\lambda_{01} > \lambda_{02}$, then
\begin{multline}
R_\text{CDF} = \\  \frac{ \frac{1}{2}\log ( 1 + [\sqrt{\lambda_{02}} + \sqrt{\lambda_{12}}]^2) \log ( 1  + \lambda_{01} ) } { \log ( 1 + [\sqrt{\lambda_{02}} + \sqrt{\lambda_{12}}]^2) + \log ( 1 + \lambda_{01}) - \log (1 + \lambda_{02} )}.
\end{multline}
\end{lemma}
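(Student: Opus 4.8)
The plan is to invoke Lemma~\ref{lemma:cdf} and thereby reduce the computation to solving a single linear equation in $\alpha$. Since $R_\text{CDF}$ is attained at $\alpha^*$, and at that point the two expressions inside the minimum in \eqref{eq:cdf-rate} coincide, it suffices to determine $\alpha^*$ from that equality and then substitute it into either expression. To keep the algebra readable I would first abbreviate $L_1 \triangleq \log(1+\lambda_{01})$, $L_2 \triangleq \log(1+\lambda_{02})$, and $L_3 \triangleq \log\bigl(1+[\sqrt{\lambda_{02}}+\sqrt{\lambda_{12}}]^2\bigr)$, so that the identity from Lemma~\ref{lemma:cdf} becomes $\tfrac{\alpha^*}{2}L_1 = \tfrac{\alpha^*}{2}L_2 + \tfrac{1-\alpha^*}{2}L_3$.

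Next I would clear the common factor $\tfrac12$ and collect the terms in $\alpha^*$, obtaining $\alpha^*(L_1 - L_2 + L_3) = L_3$, hence $\alpha^* = L_3/(L_1 - L_2 + L_3)$. Before substituting back, I would check that this is a bona fide value of $\alpha$: the hypothesis $\lambda_{01} > \lambda_{02}$ gives $L_1 > L_2$, so the denominator $L_1 - L_2 + L_3$ is strictly positive and strictly exceeds the positive numerator $L_3$, which places $\alpha^* \in (0,1)$. This is consistent with Lemma~\ref{lemma:cdf}, which already guarantees that the maximizer is the interior point where $a(\alpha)$ and $b(\alpha)$ cross, so $\alpha^*$ is the genuine optimizer and not a spurious root.

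Finally, substituting $\alpha^*$ into $R_\text{CDF} = \tfrac{\alpha^*}{2}L_1$ yields $R_\text{CDF} = \tfrac{1}{2}\,L_1 L_3 / (L_1 - L_2 + L_3)$, which is precisely the claimed formula once $L_1, L_2, L_3$ are written out in terms of the $\lambda_{ij}$. I do not anticipate any real obstacle: the argument is a short, self-contained algebraic manipulation resting entirely on Lemma~\ref{lemma:cdf}, and the only point deserving explicit mention is the verification that $\alpha^* \in [0,1]$, which follows directly from $\lambda_{01} > \lambda_{02}$.
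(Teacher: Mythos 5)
Your proposal is correct and follows essentially the same route as the paper: solve the crossing condition from Lemma~\ref{lemma:cdf} for $\alpha^*$ and substitute it back into \eqref{eq:cdf-rate}. The extra check that $\alpha^* \in (0,1)$ is a sensible addition but does not change the argument.
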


\begin{proof}[Proof of Lemma~\ref{lemma:cdf-rate}]
Substituting $\alpha^*$ from Lemma~\ref{lemma:cdf} into \eqref{eq:cdf-rate} gives Lemma~\ref{lemma:cdf-rate}.
\end{proof}

Now, by selecting the $\beta \in [0,1]$ that maximizes the two terms on the right-hand side of \eqref{eq:pdf-rate} (i.e., $\beta=0$ and $\beta=1$ respectively), we can upper bound $R_\text{PDF}$ as

\begin{align}
 R_\text{PDF}
&\leq \max_{0 \leq \alpha \leq 1} \min \Big\{ \frac{\alpha}{2} \log ( 1  + \lambda_{01} ) + \frac{1-\alpha}{2}
\log ( 1 + \lambda_{02}),\nonumber\\
& \quad \frac{\alpha}{2} \log ( 1 + \lambda_{02} ) + \frac{1-\alpha}{2}
\log ( 1 + [\sqrt{\lambda_{02}} + \sqrt{\lambda_{12}}]^2 ) \Big\}\nonumber \\
& \triangleq R_\text{PDF}^\text{UB}, \label{eq:pdf-ub} 
\end{align}
where $R_\text{PDF}^\text{UB}$ is an upper bound to $R_\text{PDF}$.

Using the same arguments in the proof of Lemma~\ref{lemma:cdf}, we obtain the following:
\begin{lemma}\label{lemma:pdf-ub-a}
Let $\alpha^\dagger$ be the $\alpha$ that attains
$R_\text{PDF}^\text{UB}$. If $\lambda_{01} > \lambda_{02}$ then $\frac{\alpha^\dagger}{2} \log ( 1  + \lambda_{01} ) + \frac{1-\alpha^\dagger}{2}
\log ( 1 + \lambda_{02}) = \frac{\alpha^\dagger}{2} \log ( 1 + \lambda_{02} ) + \frac{1-\alpha^\dagger}{2}
\log ( 1 + [\sqrt{\lambda_{02}} + \sqrt{\lambda_{12}}]^2) $.
\end{lemma}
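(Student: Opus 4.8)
The plan is to reuse verbatim the argument from the proof of Lemma~\ref{lemma:cdf}, now applied to the two functions inside the $\min$ in \eqref{eq:pdf-ub}. Write $a^\dagger(\alpha) = \frac{\alpha}{2}\log(1+\lambda_{01}) + \frac{1-\alpha}{2}\log(1+\lambda_{02})$ and $b(\alpha) = \frac{\alpha}{2}\log(1+\lambda_{02}) + \frac{1-\alpha}{2}\log(1+[\sqrt{\lambda_{02}}+\sqrt{\lambda_{12}}]^2)$, so that $R_\text{PDF}^\text{UB} = \max_{0\le\alpha\le1}\min\{a^\dagger(\alpha),b(\alpha)\}$. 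Both functions are affine, hence continuous, in $\alpha$, so it suffices to pin down their monotonicity and their relative order at $\alpha=0$ and $\alpha=1$.

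First I would establish the monotonicity directions, which is exactly where the hypothesis $\lambda_{01}>\lambda_{02}$ enters. The slope of $a^\dagger$ equals $\frac12[\log(1+\lambda_{01})-\log(1+\lambda_{02})]$, which is strictly positive precisely because $\lambda_{01}>\lambda_{02}$, so $a^\dagger$ is strictly increasing. Since $\lambda_{12}>0$ gives $[\sqrt{\lambda_{02}}+\sqrt{\lambda_{12}}]^2 > \lambda_{02}$, the slope of $b$, namely $\frac12[\log(1+\lambda_{02})-\log(1+[\sqrt{\lambda_{02}}+\sqrt{\lambda_{12}}]^2)]$, is strictly negative, so $b$ is strictly decreasing.

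Next I would compare the endpoints. At $\alpha=0$, $a^\dagger(0)=\frac12\log(1+\lambda_{02}) < \frac12\log(1+[\sqrt{\lambda_{02}}+\sqrt{\lambda_{12}}]^2)=b(0)$; at $\alpha=1$, $a^\dagger(1)=\frac12\log(1+\lambda_{01}) > \frac12\log(1+\lambda_{02})=b(1)$. Thus $a^\dagger-b$ is continuous, strictly increasing, negative at $\alpha=0$ and positive at $\alpha=1$, so it has a unique zero $\alpha^\dagger\in(0,1)$. On $[0,\alpha^\dagger]$ the inner minimum equals $a^\dagger$, which is increasing; on $[\alpha^\dagger,1]$ it equals $b$, which is decreasing; hence $\max_\alpha\min\{a^\dagger,b\}$ is attained at $\alpha^\dagger$, where $a^\dagger(\alpha^\dagger)=b(\alpha^\dagger)$ — the claimed equality.

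I do not expect a genuine obstacle here, since the argument is a direct transcription of Lemma~\ref{lemma:cdf}'s proof. The one point worth care is simply verifying that the strict inequalities $\lambda_{01}>\lambda_{02}$ and $\lambda_{12}>0$ are actually used: they are what force $a^\dagger$ and $b$ to be strictly monotone in opposite directions and to swap order between the endpoints, which is exactly what makes the crossing point interior and unique; in the boundary case $\lambda_{01}=\lambda_{02}$ the function $a^\dagger$ degenerates to a constant and the lemma would fail, consistent with the degenerate behaviour of $R_\text{PDF}$ already noted for that case.
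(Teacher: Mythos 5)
Your proposal is correct and is essentially the paper's own argument: the paper proves this lemma by invoking verbatim the reasoning of Lemma~\ref{lemma:cdf} (one line is increasing, the other decreasing, and their order swaps between the endpoints, so the max-min is attained at the crossing point), which is exactly what you carry out, just with the endpoint comparisons and uniqueness of the crossing spelled out explicitly.
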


Hence, the upper bound to $R_\text{PDF}$ is given as follows:
\begin{lemma}\label{lemma:pdf-ub-b}
If $\lambda_{01} > \lambda_{02}$, then

\begin{align}
&R_\text{PDF}^\text{UB}  = \nonumber \\ 
&\frac{\frac{1}{2} \log ( 1 + [\sqrt{\lambda_{02}} + \sqrt{\lambda_{12}}]^2) \log ( 1 + \lambda_{01}) }{ \log ( 1 + [\sqrt{\lambda_{02}} + \sqrt{\lambda_{12}}]^2) + \log ( 1 + \lambda_{01}) - 2 \log ( 1 + \lambda_{02}) } \nonumber \\
&- \frac{ \frac{1}{2} [\log ( 1 + \lambda_{02})]^2 }{ \log ( 1 + [\sqrt{\lambda_{02}} + \sqrt{\lambda_{12}}]^2) + \log ( 1 + \lambda_{01})  - 2 \log ( 1 + \lambda_{02}) }.
\end{align}
\end{lemma}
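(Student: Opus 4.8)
The plan is to follow exactly the template used for Lemma~\ref{lemma:cdf-rate}: Lemma~\ref{lemma:pdf-ub-a} already tells us that the maximizer $\alpha^\dagger$ of the $\min$ in \eqref{eq:pdf-ub} is the point where the two arguments of the $\min$ coincide, so it suffices to solve that single equation for $\alpha^\dagger$ and substitute back into \eqref{eq:pdf-ub}. To keep the algebra readable I would abbreviate $L_1 \triangleq \log(1+\lambda_{01})$, $L_2 \triangleq \log(1+\lambda_{02})$, and $L_{12} \triangleq \log(1+[\sqrt{\lambda_{02}}+\sqrt{\lambda_{12}}]^2)$. Then the first argument of the $\min$ in \eqref{eq:pdf-ub} is the affine function $\tfrac12\big[L_2+\alpha(L_1-L_2)\big]$ and the second is $\tfrac12\big[L_{12}+\alpha(L_2-L_{12})\big]$.

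Next I would equate these two affine functions, as licensed by Lemma~\ref{lemma:pdf-ub-a}, and solve the resulting linear equation, obtaining $\alpha^\dagger = (L_{12}-L_2)\big/(L_1-2L_2+L_{12})$. A short sanity check is worthwhile here: since $\lambda_{01}>\lambda_{02}$ gives $L_1>L_2$ and $\lambda_{12}>0$ gives $L_{12}>L_2$, the denominator $L_1-2L_2+L_{12}=(L_1-L_2)+(L_{12}-L_2)$ is strictly positive and $\alpha^\dagger\in(0,1)$; this confirms the maximizer is interior and that the expression in the statement is well defined (its denominator never vanishes). Substituting $\alpha^\dagger$ into $R_\text{PDF}^\text{UB}=\tfrac12\big[L_2+\alpha^\dagger(L_1-L_2)\big]$ and placing everything over the common denominator $L_1-2L_2+L_{12}$, the numerator becomes $L_2(L_1-2L_2+L_{12})+(L_{12}-L_2)(L_1-L_2)$; expanding, the $L_1L_2$ terms cancel, the $L_2L_{12}$ terms cancel, and the quadratic-in-$L_2$ terms collapse to $-L_2^2$, leaving $L_1L_{12}-L_2^2$. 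Splitting $\tfrac12(L_1L_{12}-L_2^2)/(L_{12}+L_1-2L_2)$ into its two summands reproduces verbatim the two-fraction expression claimed in Lemma~\ref{lemma:pdf-ub-b}.

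I do not expect a genuine obstacle: the whole argument is the same interior-intersection computation as Lemma~\ref{lemma:cdf-rate}, with the only difference that the first term of the $\min$ now carries the extra intercept $\tfrac{1-\alpha}{2}L_2$, which is precisely what turns the single $\log(1+\lambda_{02})$ in Lemma~\ref{lemma:cdf-rate} into the $2\log(1+\lambda_{02})$ in the denominator here and produces the subtracted $\tfrac12[\log(1+\lambda_{02})]^2$ term. The mildest point of care is simply the numerator expansion — verifying that the cross terms indeed cancel so that the answer reduces to $L_1L_{12}-L_2^2$ rather than something messier — together with recording that $\alpha^\dagger$ lies in $[0,1]$, which is what makes Lemma~\ref{lemma:pdf-ub-a} applicable in the first place.
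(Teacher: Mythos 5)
Your proposal is correct and follows exactly the paper's route: the paper's proof of Lemma~\ref{lemma:pdf-ub-b} is precisely the substitution of the intersection point $\alpha^\dagger$ from Lemma~\ref{lemma:pdf-ub-a} into \eqref{eq:pdf-ub}, and your explicit computation of $\alpha^\dagger=(L_{12}-L_2)/(L_1-2L_2+L_{12})$ and the numerator simplification to $L_1L_{12}-L_2^2$ just spells out the algebra the paper leaves implicit. Your added check that the common denominator is positive and $\alpha^\dagger\in(0,1)$ is a sound (if unstated in the paper) justification that the intersection is interior.
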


\begin{proof}[Proof of Lemma~\ref{lemma:pdf-ub-b}]
Substituting $\alpha^\dagger$ in Lemma~\ref{lemma:pdf-ub-a} into \eqref{eq:pdf-ub} gives Lemma~\ref{lemma:pdf-ub-b}.
\end{proof}

\subsection{An Upper Bound on the Normalized Gap}

We now derive the following upper bound on the normalized gap~\eqref{eq:normalized-gap} between the PDF and the CDF rates:
\begin{equation}
\bar{G} \leq 1/8. \label{eq:upper-gap}
\end{equation}

First, we define the following: $w \triangleq \log ( 1 + \lambda_{01} )$, $u \triangleq \log ( 1 + \lambda_{02} )$, $v \triangleq w -2u$, $t \triangleq \frac{u(w-u)}{w}$, and $q \triangleq \log\left(1 + [\sqrt{\lambda_{02}} + \sqrt{\lambda_{12}}]^2\right)$.
Note that while $w$, $u$, $v$, and $t$ are completely determined by the channel gains, the source's transmit power, and the receiver noise levels at the relay and the destination, $q$ is additionally determined by the relay's transmit power (in addition to the channel gains, the source's transmit power, and the noise levels). With the above definitions, we have the normalized gap upper bounded by
\begin{equation}
\bar{G} = \frac{R_\text{PDF} - R_\text{CDF}}{R_\text{CDF}} \leq \frac{R_\text{PDF}^\text{UB} - R_\text{CDF}}{R_\text{CDF}}= \frac{t(q-u)}{q(q+v)} \triangleq \bar{G}^\text{UB}. \label{eq:g-bar-ub}
\end{equation}

To prove \eqref{eq:upper-gap}, we first show the following lemma:
\begin{lemma} \label{lemma:gap}
If $\lambda_{01} > \lambda_{02}$, then
\begin{equation}
\bar{G}^\text{UB} \leq \frac{u(w-u)}{w\left[ w + 2\sqrt{u(w-u)}\right] }. \label{eq:lemma}
\end{equation}
\end{lemma}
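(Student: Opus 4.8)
The plan is to reduce the bound \eqref{eq:lemma} to the nonnegativity of a perfect square. First I would record the signs. Since $\lambda_{01} > \lambda_{02}$, we have $w > u > 0$, so $t = \tfrac{u(w-u)}{w} > 0$; since $\lambda_{12} > 0$ we have $\bigl(\sqrt{\lambda_{02}} + \sqrt{\lambda_{12}}\bigr)^2 > \lambda_{02}$, hence $q > u > 0$, and therefore $q + v = q + w - 2u > w - u > 0$. Thus each factor in $\bar{G}^\text{UB} = \tfrac{t(q-u)}{q(q+v)}$ is strictly positive, which legitimizes moving the inequality across the quotient in the steps below.

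Next, writing $s \triangleq \sqrt{u(w-u)}$, so that $s^2 = u(w-u) = tw$, the right-hand side of \eqref{eq:lemma} becomes $\tfrac{tw}{w(w+2s)} = \tfrac{t}{w+2s}$, and (after cancelling the positive factor $t(q-u)$) the claim is equivalent to $(q-u)(w+2s) \le q(q+v)$. The key is the identity
\[
q(q+v) - (q-u)(w+2s) = \bigl(q - u - s\bigr)^2 ,
\]
obtained by expanding both sides with $v = w-2u$ and using $s^2 = u(w-u)$, equivalently $uw = u^2 + s^2$; phrased as a quadratic in $q$, the left-hand side is $q^2 - 2(u+s)q + (uw + 2us)$, whose discriminant $4\bigl(u^2+s^2-uw\bigr)$ vanishes. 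Hence $q(q+v) \ge (q-u)(w+2s) > 0$, and
\[
\bar{G}^\text{UB} = \frac{t(q-u)}{q(q+v)} \le \frac{t(q-u)}{(q-u)(w+2s)} = \frac{t}{w+2s} = \frac{u(w-u)}{w\bigl[\,w + 2\sqrt{u(w-u)}\,\bigr]},
\]
which is exactly \eqref{eq:lemma}.

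The only step requiring any insight is spotting that the relevant quadratic in $q$ is a perfect square — equivalently, that the relation $s^2 = u(w-u)$ forces its discriminant to zero; once this is noticed, the remainder is routine manipulation with the logarithmic substitutions and the positivity hypotheses $\lambda_{ij} > 0$, so I do not anticipate a genuine obstacle. As a by-product, the argument shows equality holds in \eqref{eq:lemma} precisely when $q = u + \sqrt{u(w-u)}$, i.e.\ for one particular value of $\lambda_{12}$, which will be useful when this bound is pushed further.
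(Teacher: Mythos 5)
Your proof is correct, but it follows a genuinely different route from the paper's. The paper writes $\bar{G}^\text{UB} = t f(q)$ with $f(q) = (q-u)/[q(q+v)]$ and maximizes $f$ over all $q>u$ by calculus: it computes $\frac{df}{dq} = \frac{-q^2+2uq+uv}{[q(q+v)]^2}$, analyzes the sign of the numerator, concludes the maximum occurs at $q_2 = u+\sqrt{u(u+v)}$, and then substitutes $u+v=w-u$ and $t=u(w-u)/w$ to obtain \eqref{eq:lemma}. You instead establish the pointwise inequality directly through the perfect-square identity $q(q+v)-(q-u)(w+2s) = (q-u-s)^2$ with $s=\sqrt{u(w-u)}$, which checks out: expanding gives $q^2-2(u+s)q+uw+2us$, and $uw=u^2+s^2$ makes this a square; your positivity bookkeeping ($t>0$, $q>u$, $q+v>w-u>0$) is also right, and your equality point $q=u+s$ coincides with the paper's maximizer $q_2$ since $u+v=w-u$. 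What each buys: your argument is more elementary --- no differentiation, no sign analysis of a quadratic, and it sidesteps the paper's discussion of $q$ being adjustable independently of $t,u,v$ via $P_1$, which is unnecessary for a bound that holds for every $q>u$; the paper's calculus route, on the other hand, exhibits the full shape of $f$ (increasing up to $q_2$, then decreasing), which is the structural fact behind the later observation that $\bar{G}$ approaches $1/8$ only for particular values of $\lambda_{12}$. One cosmetic caution: you reuse the symbol $s$ for $\sqrt{u(w-u)}$, whereas immediately after this lemma the paper sets $s\triangleq w/u$; rename your auxiliary variable to avoid a clash if this is spliced into the text.
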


\begin{proof}[Proof of Lemma~\ref{lemma:gap}]
We first write $\bar{G}^\text{UB} = t f(q)$ where $f(q) = \frac{(q-u)}{q(q+v)}$. Note that for any fixed channel gains, $P_0$, $N_1$, and $N_2$, the variables $t$, $u$, and $v$ are fixed, but the variable $q$ can take any value by choosing an appropriate $P_1$ subject to the constraint $q  \triangleq  \log\left(1 + [\sqrt{\lambda_{02}} + \sqrt{\lambda_{12}}]^2\right)  > \log ( 1 + \lambda_{02} )  \triangleq u $. This means $q$ can be chosen independent of $t$, $u$, and $v$. Now, for any $t$, $u$, and $v$, we maximize $f(q)$ with respect to $q$. Note that as $\lambda_{01} > \lambda_{02}$, we must have $v > -u$. So, for the region of interest, i.e., where $q > u > -v$, the function $f(q)$ is continuously differentiable. By differentiating $f(q)$ with respect to $q$, we get $\frac{d f(q)}{d q} = \frac{ -q^2 + 2uq + uv } {[q(q+v)]^2}$.
The denominator is always positive, and the numerator is a quadratic function with roots $q_1 = u - \sqrt{u(u+v)}$ and $q_2 = u + \sqrt{u(u+v)}$. Note that $q_1 < u < q_2$ since $u + v > 0$. This means
\begin{equation}
-q^2 + 2uq + uv \begin{cases}
> 0, & \text{ if } u < q < q_2\\
=0, & \text{ if } q = q_2\\
< 0,& \text{ if } q > q_2.
\end{cases}
\end{equation}
So, the maximum of $f(q)$ for $q > u$ occurs at $q_2 = u + \sqrt{u(u+v)}$. Thus,
\begin{equation}
 \bar{G}^\text{UB} \leq t \times \max_{q>u} f(q) = \frac{t\sqrt{u(u+v)}}{[u + \sqrt{u(u+v)}][u + \sqrt{u(u+v)}+v]}.
\end{equation}
Substituting $u+v = w-u$ and $t=u(w-u)/w$ into the above equation, we get \eqref{eq:lemma}.
\end{proof}

With the above lemma, we now prove \eqref{eq:upper-gap}.
First, we define $s \triangleq w/u$, where $s > 1$ because $w > u$. The right-hand side of \eqref{eq:lemma} can be written as $1/h(s)$ where $h(s) = s\left[ 1 + 2 (s-1)^{-\frac{1}{2}} + (s-1)^{-1} \right]$.
This means $\bar{G}^\text{UB} \leq 1/ [ \min_{s>1} h(s)]$. Note that $h(s)$ is continuously differentiable for all $s > 1$, and its first derivative is
\begin{equation}
\frac{dh(s)}{ds} = \frac{(s-1)^2 + (s-2)\sqrt{s-1} -1}{(s-1)^2} 
\begin{cases}
< 0, &\text{ if } 1 < s < 2\\
=0, &\text{ if } s = 2\\
> 0, & \text{ if } s > 2.
\end{cases}
\end{equation}
So, $\min_{s>1} h(s) = h(2) = 8$. This gives \eqref{eq:upper-gap}.

Combining \eqref{eq:upper-gap} and the trivial lower bound $R_\text{CDF} \leq R_\text{PDF}$, we have Theorem~\ref{theorem:main}. $\hfill \blacksquare$

\subsection{A Note on the Trivial Lower Bound to $\bar{G}$}

Suppose that $\lambda_{01} \gg \lambda_{02} + \lambda_{12}$. The optimal $\alpha$ for the max-min operation in \eqref{eq:cdf-rate} is close to zero. Hence, $R_\text{CDF} \approx \frac{1}{2} \log( 1 + [ \sqrt{\lambda_{02}} + \sqrt{\lambda_{12}} ] ^2)$. For \eqref{eq:pdf-rate}, the optimal $\alpha$ is close to zero and the optimal $\beta$ is close to one. Hence, $R_\text{PDF} \approx \frac{1}{2} \log( 1 + [ \sqrt{\lambda_{02}} + \sqrt{\lambda_{12}} ] ^2)$. So the trivial lower bound $R_\text{CDF} \leq R_\text{PDF}$ is (almost) tight when $\lambda_{01} \gg \lambda_{02} + \lambda_{12}$.

\subsection{A Note on the Upper Bound to $\bar{G}$}

Numerical results show that the normalized gap $\bar{G}$ approaches the upper bound of $12.5\%$ (in Theorem~\ref{theorem:main}) as $\lambda_{12}$ increases (for selected $\lambda_{01}$ and $\lambda_{02}$ for each $\lambda_{12}$). For example, setting $\lambda_{12} = 10^5$, $\lambda_{01}=62000$, and $\lambda_{02}=230$, the normalized gap is $\bar{G}= 12.2\%$.

\section{Case Study: The Path Loss Model}

In this section, we consider the free-space path loss model where the channel gain is given by $h_{ij} = d_{ij}^{-1}$, where $d_{ij}$ is the distance between the transmitter and the receiver, and so $\lambda_{ij} = \frac{P_{i}}{d_{ij}^{-2}N_j}$.

\begin{figure*}[!t] 
\normalsize 
\setcounter{mytempeqncnt}{\value{equation}} 
\begin{subequations}
  \begin{align}
    G &= R_\text{PDF} - R_\text{CDF} \leq R_\text{PDF}^\text{UB} - R_\text{CDF} \\
& ={\textstyle \frac{ \Big[ \log(1+\lambda_{01})-\log(1+\lambda_{02})\Big]
      \Big[ \log(1 + [\sqrt{\lambda_{02}} + \sqrt{\lambda_{12}}]^2) -
      \log(1+\lambda_{02}) \Big] \log(1+\lambda_{02})}{ 2\Big[\log ( 1 +
      [\sqrt{\lambda_{02}} + \sqrt{\lambda_{12}}]^2) + \log ( 1 +
      \lambda_{01}) - 2 \log ( 1 + \lambda_{02})\Big] \Big[\log ( 1 +
      [\sqrt{\lambda_{02}} + \sqrt{\lambda_{12}}]^2) + \log ( 1 +
      \lambda_{01}) - \log ( 1 + \lambda_{02})\Big]} } \triangleq G^\text{UB}. \label{eq:g-2}
  \end{align}
\end{subequations}
\setcounter{equation}{\value{mytempeqncnt}}
\hrulefill 
\vspace*{4pt}
\end{figure*}

\subsection{Varying Relay's Position}

We fix coordinates of the source at $(0,0)$ and the destination at $(0,1)$, and compute the actual $\bar{G}$ for different relay coordinates $(x_1,y_1)$.
We assume that 
$P_0 = P_1 = 100$, and $N_1 = N_2 = 1$.
We are interested in networks where $\lambda_{01} > \lambda_{02}$, which is equivalent to restricting the relay position to $\mathcal{S} = \{(x_1,y_1): \sqrt{x_1^2 + y_1^2} < 1\}$. The result is shown in Fig.~\ref{fig:3d}. While an upper bound derived for $\bar{G}$ in the previous section is $\frac{1}{8} = 12.5\%$ for all channel parameters, for this example of equal power and equal noise, the actual gap is only 6.45\% or less.

\begin{figure}[t]
\centering
\includegraphics[width=9cm]{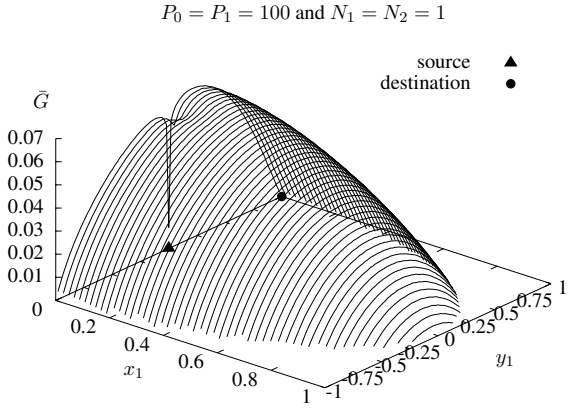}
\caption{A plot showing the normalized difference, $\bar{G} \triangleq \frac{R_\text{PDF}-R_\text{CDF}}{R_\text{CDF}}$, for varying relay positions (only positions at $x_1 \geq 0$ are shown due to symmetry): we have $\max_{(x_1,y_1) \in \mathcal{S}} \bar{G} = 0.0645$ }
\label{fig:3d}
\end{figure}

\subsection{Asymptotic Cases}

\addtocounter{equation}{1}
Now, we present four cases in which $G$ or $\bar{G}$ tends to zero, meaning that CDF performs as well as PDF. In the following subsections, unless otherwise stated, the transmit power $P_0$ and $P_1$, the inter-node distances $d_{01}$, $d_{02}$, $d_{12}$, and the receiver noise $N_1$, $N_2$ are all positive and finite.
First, we find an upper bound on the gap between the PDF and the CDF rates, given by $G^\text{UB}$ in \eqref{eq:g-2}.

\subsubsection{Relay Close to the Source}
When the relay is close to the source, i.e., $d_{01} \rightarrow 0$, we have $\lambda_{01} \rightarrow \infty$. Since $\lambda_{02}$ and $\lambda_{12}$ are finite, we see that both 
$G^\text{UB} \rightarrow 0$ and $\bar{G}^\text{UB} \rightarrow 0$
as $d_{01} \rightarrow 0$.

\subsubsection{Relay Close to the Destination}
Similarly, when the relay is close to the destination, i.e., $d_{12} \rightarrow 0$, we have $\lambda_{12} \rightarrow \infty$. Since $\lambda_{01}$ and $\lambda_{02}$ are finite, we also see that
$G^\text{UB} \rightarrow 0$ and $\bar{G}^\text{UB} \rightarrow 0$
as $d_{12} \rightarrow 0$.

\subsubsection{High SNR} \label{section:high-snr}
Now, we consider fixed node positions, and increase the transmit power of the nodes. We let $P_0 = k_0P$, $P_1 = k_1P$, $N_1=1$, $N_2=1$ for some constants $k_0$ and $k_1$, and investigate the behavior of $G$ and $\bar{G}$ when $P$ increases. We further define the following, which are independent of $P$: $C_1 \triangleq \frac{\lambda_{01}}{\lambda_{02}} = \frac{d_{01}^{-2}}{d_{02}^{-2}}$ and $C_2 \triangleq \frac{[\sqrt{\lambda_{02}} + \sqrt{\lambda_{12}}]^2}{\lambda_{02}} = \frac{\left[\sqrt{d_{02}^{-2}k_0}+ \sqrt{d_{12}^{-2}k_1}\right]^2}{d_{02}^{-2}k_0}$.

For high SNR, we use the approximation $\lim_{\lambda \rightarrow \infty} \log ( 1 + \lambda) = \log \lambda$, and obtain
\begin{align} 
\lim_{P \rightarrow \infty} G^\text{UB} &= \frac{\log C_1 \log C_2 \log P + \log C_1 \log C_2 \log (d_{02}^{-2}k_0) }{2[ \log (C_1C_2) \log P + \log (C_1C_2) \log (C_2 d_{01}^{-2}k_0) ]}\nonumber \\
&\approx 
\frac{1}{2} \left( \frac{1}{\log C_1} + \frac{1}{\log C_2}  \right)^{-1}. \label{eq:high-snr}
\end{align}

So, in the high SNR regime where the PDF and the CDF rates are high, the gap between the rates is upper bounded by a constant. This constant, when normalized to the PDF or the CDF rates, approaches zero as the SNR increases, i.e., $\lim_{P \rightarrow \infty} \bar{G}^\text{UB} \rightarrow 0$.  

\begin{remark} \label{remark:large-g}
From \eqref{eq:high-snr}, we see that $\lim_{P \rightarrow \infty} G^\text{UB}$ increases as $C_1$ and $C_2$ increase. This means with a fixed $\lambda_{02}$, for sufficiently large $\lambda_{01}$ and $\lambda_{12}$, and a much larger $P$, $G^\text{UB}$ can be made arbitrarily large.
\end{remark}

\subsubsection{Low SNR}

Finally, we investigate the case where the SNR tends to zero. We follow the settings (inter-node distances, transmit power, and receiver noise) as in the high-SNR case above, but with $P \rightarrow 0$.

For low SNR, as $P \rightarrow 0$, both $R_\text{PDF}, R_\text{CDF} \rightarrow 0$. So, $\lim_{P \rightarrow 0} G^\text{UB} \rightarrow 0$.

Using the approximation $\lim_{\lambda \rightarrow 0} \log(1+\lambda) = \frac{\lambda}{\ln 2}$, we further see that the normalized gap is 
\begin{equation}
\lim_{P \rightarrow 0}\bar{G}^\text{UB}  = C_3 C_4 C_5, \nonumber
\end{equation}
where $C_3 = \left( \frac{d_{01}}{d_{02}} \right)^2$,\\ $C_4 = \frac{(d_{01}^{-2} - d_{02}^{-2})k_0}{ (d_{01}^{-2} - d_{02}^{-2})k_0 + 2d_{02}^{-1}d_{12}^{-1}\sqrt{k_0 k_1}+ d_{12}^{-2}k_1}$, and\\ $C_5 = \frac{ \left(\sqrt{d_{02}^{-2}k_0}+ \sqrt{d_{12}^{-2}k_1} \right)^2- d_{02}^{-2}k_0 }{\left(\sqrt{d_{02}^{-2}k_0}+ \sqrt{d_{12}^{-2}k_1} \right)^2}$ .

As $\lambda_{01} > \lambda_{02}$, we have $d_{01} < d_{02}$. So $C_3,C_4,C_5 < 1$, and it follows that $\lim_{P \rightarrow 0}\bar{G}^\text{UB} < \min_{i \in \{3,4,5\}} C_i$.  Consequently, $\lim_{P \rightarrow 0}\bar{G}^\text{UB} \rightarrow 0$ if any of the following is true: (i) $C_3 \rightarrow 0$, i.e., when the relay is close to the source, $d_{01} \rightarrow 0$; (ii) $C_5 \rightarrow 0$, i.e., when the relay is close to the destination, $d_{12} \rightarrow 0$; or (iii) $C_4 \rightarrow 0$, i.e., when the relay and the destination are about the same distance from the source, $d_{01}^{-2} - d_{02}^{-2} \rightarrow 0$ [by the triangle inequality, we have $d_{12}^{-1} \geq \frac{1}{ d_{01} + d_{02}}$].

\section{Remarks}
In this letter, we have shown that there is little to be gained by doing partial decoding rather than complete decoding at the relay for the half-duplex AWGN single-relay channel. We have analytically shown that using partial-decode-forward (PDF) can increase the rate over complete-decode-forward (CDF) by at most one-eighth.  We have also identified four scenarios in which the CDF rates asymptotically approach the PDF rates.

While a multiplicative-type bound on the gain of PDF over CDF has been obtained in this paper, we note that the absolute gain {\em might} be unbounded in the high SNR regime (we have shown in Remark~\ref{remark:large-g} that our upper bound on the absolute gain is unbounded). Hence, one may attempt PDF in the high-SNR regime when the absolute gain in transmission rate (as opposed to the percentage gain) outweighs the system complexity.

The results in this letter provide a strong motivation to extend the comparison between these two strategies to the multiple-relay channel to investigate if partial-decode-forward is beneficial when there are more relays in the network, in which partial-decode-forward is much more computationally complex than complete-decode-forward.


\begin{thebibliography}{1}
\providecommand{\url}[1]{#1}
\csname url@samestyle\endcsname
\providecommand{\newblock}{\relax}
\providecommand{\bibinfo}[2]{#2}
\providecommand{\BIBentrySTDinterwordspacing}{\spaceskip=0pt\relax}
\providecommand{\BIBentryALTinterwordstretchfactor}{4}
\providecommand{\BIBentryALTinterwordspacing}{\spaceskip=\fontdimen2\font plus
\BIBentryALTinterwordstretchfactor\fontdimen3\font minus
  \fontdimen4\font\relax}
\providecommand{\BIBforeignlanguage}[2]{{%
\expandafter\ifx\csname l@#1\endcsname\relax
\typeout{** WARNING: IEEEtran.bst: No hyphenation pattern has been}%
\typeout{** loaded for the language `#1'. Using the pattern for}%
\typeout{** the default language instead.}%
\else
\language=\csname l@#1\endcsname
\fi
#2}}
\providecommand{\BIBdecl}{\relax}
\BIBdecl

\bibitem{covergamal79}
T.~M. Cover and A.~A. {El Gamal}, ``Capacity theorems for the relay channel,''
  \emph{IEEE Trans. Inf. Theory}, vol. IT-25, no.~5, pp. 572--584, Sept. 1979.

\bibitem{kramergastpar04}
G.~Kramer, M.~Gastpar, and P.~Gupta, ``Cooperative strategies and capacity
  theorems for relay networks,'' \emph{IEEE Trans. Inf. Theory}, vol.~51,
  no.~9, pp. 3037--3063, Sept. 2005.

\bibitem{elgamalkim2001}
A.~{El Gamal} and Y.~Kim, \emph{Network Information Theory}, 1st~ed.\hskip 1em
  plus 0.5em minus 0.4em\relax Cambridge University Press, 2011.

\bibitem{ongwangmotani08allerton}
L.~Ong, W.~Wang, and M.~Motani, ``Achievable rates and optimal schedules for
  half duplex multiple-relay networks,'' in \emph{Proc. 46th Allerton Conf.
  Commun. Control Comput. (Allerton Conf.)}, Monticello, USA, Sept. 23--26
  2008, pp. 1317--1324.

\bibitem{hostmadsen02}
A.~H{\o}st-Madsen, ``On the capacity of wireless relaying,'' in \emph{Proc.
  IEEE Veh. Technol. Conf. (VTC)}, Birmingham, USA, May 6--9 2002, pp.
  1333--1337.

\bibitem{hostmadsenzhang05}
A.~H{\o}st-Madsen and J.~Zhang, ``Capacity bounds and power allocation for the
  wireless relay channel,'' \emph{IEEE Trans. Inf. Theory}, vol.~51, no.~6, pp.
  2020--2040, June 2005.

\bibitem{elgamalmehseni06}
A.~{El Gamal}, M.~Mohseni, and S.~Zahedi, ``Bounds on capacity and minimum
  energy-per-bit for {AWGN} relay channels,'' \emph{IEEE Trans. Inf. Theory},
  vol.~52, no.~4, pp. 1545--1561, Apr. 2006.

\bibitem{kramermaric06}
G.~Kramer, I.~Mari{\'c}, and R.~D. Yates, ``Cooperative communications,''
  \emph{Foundations and Trends in Networking}, vol.~1, no. 3--4, pp. 271--425,
  2006.

\end{thebibliography}


\end{document}